\documentclass[pra,aps,showpacs,twoside,twocolumn,superscriptaddress]{revtex4}
%\documentclass[12pt]{article}

% part only for tex. There are 9 parts of preliminary abbreviations.

%1=general environment

\usepackage{amsmath,amsfonts,amssymb,caption,hyperref,color,epsfig,graphics,graphicx,latexsym,mathrsfs,revsymb,theorem,url,verbatim,epstopdf}

\hypersetup{colorlinks,linkcolor={blue},citecolor={blue},urlcolor={red}} 

%\usepackage{multicol}

%\uthree-qubitsepackage{fullpage}
%\usepackage[cm]{fullpage}

%\usepackage{pdfpages}

\usepackage{hyperref}

\theoremstyle{plain}
\newtheorem{definition}{Definition}
\newtheorem{proposition}[definition]{Proposition}
\newtheorem{lemma}{Lemma}

\newtheorem{theorem}{Theorem}
\newtheorem{corollary}{Corollary}
\newtheorem{conjecture}{Conjecture}

\newtheorem{remark}{Remark}
\newtheorem{example}{Example}
\newtheorem{question}[definition]{Question}

%1

\def\squareforqed{\hbox{\rlap{$\sqcap$}$\sqcup$}}
\def\qed{\ifmmode\squareforqed\else{\unskip\nobreak\hfil
\penalty50\hskip1em\null\nobreak\hfil\squareforqed
\parfillskip=0pt\finalhyphendemerits=0\endgraf}\fi}
\def\endenv{\ifmmode\;\else{\unskip\nobreak\hfil
\penalty50\hskip1em\null\nobreak\hfil\;
\parfillskip=0pt\finalhyphendemerits=0\endgraf}\fi}
% unavailable for beamer:
\newenvironment{proof}{\noindent \textbf{{Proof.~} }}{\qed}
\def\Dbar{\leavevmode\lower.6ex\hbox to 0pt
{\hskip-.23ex\accent"16\hss}D}
% Define a new 'leo' style for the package that will use a smaller font.
\makeatletter
\def\url@leostyle{%
  \@ifundefined{selectfont}{\def\UrlFont{\sf}}{\def\UrlFont{\small\ttfamily}}}
\makeatother
% Now actually use the newly defined style.
\urlstyle{leo}

\def\bcj{\begin{conjecture}}
\def\ecj{\end{conjecture}}
\def\bcr{\begin{corollary}}
\def\ecr{\end{corollary}}
\def\bd{\begin{definition}}
\def\ed{\end{definition}}
\def\bea{\begin{eqnarray}}
\def\eea{\end{eqnarray}}
\def\bem{\begin{enumerate}}
\def\eem{\end{enumerate}}
\def\bex{\begin{example}}
\def\eex{\end{example}}
\def\bim{\begin{itemize}}
\def\eim{\end{itemize}}
\def\bl{\begin{lemma}}
\def\el{\end{lemma}}
\def\bma{\begin{bmatrix}}
\def\ema{\end{bmatrix}}
\def\bpf{\begin{proof}}
\def\epf{\end{proof}}
\def\bpp{\begin{proposition}}
\def\epp{\end{proposition}}
\def\bqu{\begin{question}}
\def\equ{\end{question}}
\def\br{\begin{remark}}
\def\er{\end{remark}}
\def\bt{\begin{theorem}}
\def\et{\end{theorem}}

\def\btb{\begin{tabular}}
\def\etb{\end{tabular}}

\newcommand{\nc}{\newcommand}

%2=alphabet

\def\a{\alpha}
\def\b{\beta}

\def\d{\delta}

\def\r{\rho}
\def\s{\sigma}

 \nc{\bbA}{\mathbb{A}} \nc{\bbB}{\mathbb{B}} \nc{\bbC}{\mathbb{C}}
 \nc{\bbD}{\mathbb{D}} \nc{\bbE}{\mathbb{E}} \nc{\bbF}{\mathbb{F}}
 \nc{\bbG}{\mathbb{G}} \nc{\bbH}{\mathbb{H}} \nc{\bbI}{\mathbb{I}}
 \nc{\bbJ}{\mathbb{J}} \nc{\bbK}{\mathbb{K}} \nc{\bbL}{\mathbb{L}}
 \nc{\bbM}{\mathbb{M}} \nc{\bbN}{\mathbb{N}} \nc{\bbO}{\mathbb{O}}
 \nc{\bbP}{\mathbb{P}} \nc{\bbQ}{\mathbb{Q}} \nc{\bbR}{\mathbb{R}}
 \nc{\bbS}{\mathbb{S}} \nc{\bbT}{\mathbb{T}} \nc{\bbU}{\mathbb{U}}
 \nc{\bbV}{\mathbb{V}} \nc{\bbW}{\mathbb{W}} \nc{\bbX}{\mathbb{X}}
 \nc{\bbZ}{\mathbb{Z}}

%\bbQ denotes the set of rational, real and integer numbers in transparency.

 \nc{\bA}{{\bf A}} \nc{\bB}{{\bf B}} \nc{\bC}{{\bf C}}
 \nc{\bD}{{\bf D}} \nc{\bE}{{\bf E}} \nc{\bF}{{\bf F}}
 \nc{\bG}{{\bf G}} \nc{\bH}{{\bf H}} \nc{\bI}{{\bf I}}
 \nc{\bJ}{{\bf J}} \nc{\bK}{{\bf K}} \nc{\bL}{{\bf L}}
 \nc{\bM}{{\bf M}} \nc{\bN}{{\bf N}} \nc{\bO}{{\bf O}}
 \nc{\bP}{{\bf P}} \nc{\bQ}{{\bf Q}} \nc{\bR}{{\bf R}}
 \nc{\bS}{{\bf S}} \nc{\bT}{{\bf T}} \nc{\bU}{{\bf U}}
 \nc{\bV}{{\bf V}} \nc{\bW}{{\bf W}} \nc{\bX}{{\bf X}}
 \nc{\bZ}{{\bf Z}}

%\bQ, \bR, \bZ denotes the set of rational, real and integer numbers.

\nc{\cA}{{\cal A}} \nc{\cB}{{\cal B}} \nc{\cC}{{\cal C}}
\nc{\cD}{{\cal D}} \nc{\cE}{{\cal E}} \nc{\cF}{{\cal F}}
\nc{\cG}{{\cal G}} \nc{\cH}{{\cal H}} \nc{\cI}{{\cal I}}
\nc{\cJ}{{\cal J}} \nc{\cK}{{\cal K}} \nc{\cL}{{\cal L}}
\nc{\cM}{{\cal M}} \nc{\cN}{{\cal N}} \nc{\cO}{{\cal O}}
\nc{\cP}{{\cal P}} \nc{\cQ}{{\cal Q}} \nc{\cR}{{\cal R}}
\nc{\cS}{{\cal S}} \nc{\cT}{{\cal T}} \nc{\cU}{{\cal U}}
\nc{\cV}{{\cal V}} \nc{\cW}{{\cal W}} \nc{\cX}{{\cal X}}
\nc{\cZ}{{\cal Z}}

% \cX denotes a set, etc in mathematical definition.

\nc{\hA}{{\hat{A}}} \nc{\hB}{{\hat{B}}} \nc{\hC}{{\hat{C}}}
\nc{\hD}{{\hat{D}}} \nc{\hE}{{\hat{E}}} \nc{\hF}{{\hat{F}}}
\nc{\hG}{{\hat{G}}} \nc{\hH}{{\hat{H}}} \nc{\hI}{{\hat{I}}}
\nc{\hJ}{{\hat{J}}} \nc{\hK}{{\hat{K}}} \nc{\hL}{{\hat{L}}}
\nc{\hM}{{\hat{M}}} \nc{\hN}{{\hat{N}}} \nc{\hO}{{\hat{O}}}
\nc{\hP}{{\hat{P}}} \nc{\hR}{{\hat{R}}} \nc{\hS}{{\hat{S}}}
\nc{\hT}{{\hat{T}}} \nc{\hU}{{\hat{U}}} \nc{\hV}{{\hat{V}}}
\nc{\hW}{{\hat{W}}} \nc{\hX}{{\hat{X}}} \nc{\hZ}{{\hat{Z}}}

\nc{\hn}{{\hat{n}}}

%3=math symbol, personal

%3.1 tensor rank

% canonical decomposition, namely the convex sum of r product vectors

% canonical decomposition over the real field

% symmetric canonical decomposition, namely the convex sum of r symmetric product vectors

% symmetric canonical decomposition over the real field

% orthogonal canonical decomposition, namely the convex sum of r orthogonal product vectors

% strong orthogonal canonical decomposition, namely the convex sum of r locally orthogonal product vectors

%rk=tensor rank with canonical decomposition

%rk=tensor rank with real canonical decomposition

%srk=symmetric tensor rank with symmetric canonical decomposition

%srk=symmetric tensor rank with real canonical decomposition

%rrk=regularized tensor rank

%rsrk=regularized symmetric tensor rank

%grk=generic tensor rank equal to the tensor rank of most tensors in the space; there is only one grk

%trk=typical tensor rank equal to the tensor rank of a part of tensors in the space; there may exist a few different trk

%ark=asymmetric tensor rank, where decomposition contains at least one asymmetric product vectors

%brk=border tensor rank

%bsrk=symmetric border tensor rank

%ork=orthogonal tensor rank

%sork=strong orthogonal tensor rank

\makeatletter

%3.2 general

%birank=(rank,rank^\G)

%cps=closest product vector in the geometric measure of entanglement

%cps=closest separable state in the geometric measure of entanglement

%csd=canonical separable decomposition, i.e., reaching the length
\def\diag{\mathop{\rm diag}}

%EV=eigenvalue

%Loc=local CPTP map

\def\min{\mathop{\rm min}}

%pr=polynomial rank in algebraic geometry for symmetric states

%pro=product vectors

%sd=separable decomposition
\def\sep{\mathop{\rm SEP}}

%sr=Schmidt rank

\def\tr{\mathop{\rm Tr}}

\makeatletter

\newcommand{\Rmnum}[1]{\expandafter\@slowromancap\romannumeral #1@}
\makeatother

%3.3 abbreviation

%4=math symbol, default

\newcommand{\bra}[1]{\langle#1|}
\newcommand{\ket}[1]{|#1\rangle}

\newcommand{\norm}[1]{\lVert#1\rVert}

%5=color

% open questions

% suspicious result or derivation

%6=journal

%\newcommand{\pra}{Phys. Rev. A~}

% APS journals, such as jmo, pra, prl, rmp etc are defined by default.

%7=To make unique the abbreviation for the title of parts and sections, we follow the rules:

%a. Put "q" ahead of the word of quantum physics sections, e.g., %physics=qphysics;

%b. Put "c" ahead of the word of computer sections, e.g., %NPCvsNP=cNPCvsNP;

%c. Put "m" ahead of the word of mathematics sections, e.g., %matrix=mmatrix;

%d. When there are identical abbreviations, such as quantum operations, quantum operations and entanglement,
%quantum operations and distinguishing, mark them as qoperations, qoentanglement, qodistingushing respectively. In other words,
%take the first alphabet of the first n words in turn.

%8=To make unique the abbreviation for the references, we follow the rules:

%a. When there are only one name, take the last name and year, e.g., Lin Chen 2011=chen11;

%b. When there are two names take the initial alphabets of both last names and year, e.g.,
%Lin Chen and Huangjun Zhu, 2011=cz11;

%c. When there are three or more names, take the initial alphabets of both last names and year, e.g.,
%Lin Chen, Huangjun Zhu, and Tzu-Chieh Wei, 2011=czw11;

%d. When there are identical abbreviations, put the publication name in the end; e.g., hhh00PRL and hhh00PRA;

%e. When the reference is a book, put "book" in the end; e.g., harris92book.

\def\Dbar{\leavevmode\lower.6ex\hbox to 0pt
{\hskip-.23ex\accent"16\hss}D}
%% \author {{ Dragomir {\v{Z} \Dbar}okovi{\'c}}}

%% \ation{Department of Pure Mathematics and Institute for
%% Quantum Computing, University of Waterloo, Waterloo, Ontario, %% N2L 3G1, Canada} \email{djokovic@uwaterloo.ca}

\begin{document}
	\title{Quantifying the entanglement of quantum states under the geometric method}

\author{Xian Shi}\email[]
{shixian01@gmail.com}
\affiliation{College of Information Science and Technology, Beijing University of Chemical Technology, Beijing 100029, China}
\affiliation{LMIB(Beihang University), Ministry of Education, and School of Mathematical Sciences, Beihang University, Beijing 100191, China}

\author{Lin Chen}\email[]{linchen@buaa.edu.cn (corresponding author)}
\affiliation{LMIB(Beihang University), Ministry of Education, and School of Mathematical Sciences, Beihang University, Beijing 100191, China}
\affiliation{International Research Institute for Multidisciplinary Science, Beihang University, Beijing 100191, China}
\author{Yixuan Liang}\email[]{linchen@buaa.edu.cn (corresponding author)}
\affiliation{LMIB(Beihang University), Ministry of Education, and School of Mathematical Sciences, Beihang University, Beijing 100191, China}
%\author{Yi Shen}\email[]
%{yishen@buaa.edu.cn}
%\affiliation{School of Mathematics and Systems Science, Beihang University, Beijing 100191, China}
%

%\author{Lijun Zhao}
%\affiliation{School of Mathematics and Systems Science, Beihang University, Beijing 100191, China}

%\author{Yumin Guo}
%\affiliation{School of Mathematical Sciences, Capital Normal University, Beijing 100048, China}

\date{\today}

\pacs{03.65.Ud, 03.67.Mn}
\begin{abstract}
	Quantifying entanglement is an important issue in quantum information theory.   Here we consider the entanglement measures through the trace norm in terms of two methods, the modified measure and the extended measure for bipartite states.  We present the analytical formula for the pure states in terms of the modified measure and the mixed states of two-qubit systems for the  extended measure.  We also generalize the modified measure from bipartite states to tripartite states.
\end{abstract}
\maketitle
\section{Introduction}
\indent Quantum entanglement is an essential feature of quantum mechanics. It plays an important role in quantum information and quantum computation theory \cite{horodecki2009quantum}, such as superdense coding \cite{bennett1992communication}, teleportation \cite{bennett1993teleporting} and the speedup of quantum algorithms \cite{shimoni2005entangled}. \\
\indent How to quantify the amount of entanglement for a multipartite system is important in quantum information theory. As it
is linked with many areas of the entanglement theory, such as entanglement distillability \cite{bennett1996mixed}, the transformation of quantum states \cite{vedral1997quantifying,jonathan1999minimal,rains1999bound}, monogamy of entanglement \cite{coffman2000distributed,bai2014general,zhu2014entanglement,de2014monogamy,shi2021multilinear}, quantum speed limit \cite{rudnicki2021quantum} and so on. Due to the importance of this issue, it has been
 investigated almost since the end of the last century \cite{shimony1995degree,bennett1996mixed,vedral1997quantifying}. In 1996, Bennett $et$ $al.$ proposed the distillable entanglement and entanglement cost, as well as their operational interpretations \cite{bennett1996mixed}. In 1997, Vedral presented three necessary conditions that an entanglement measure should satisfy \cite{vedral1997quantifying}. And Vidal presented a general method to quantify the entanglement for the bipartite entangled states \cite{vidal2000entanglement}. This method is built on the function of pure states in the bipartite system and generalized to the mixed states under the convex roof extended methods on the bipartite system. Other than the generic conditions and the building methods of entanglement measures, some meaningful thoughts to bulid the entanglement measures are presented. The robustness of an entangled state tells us how many degrees of the separable states needed to make the state separable \cite{vidal1999robustness}. The squashed entanglement was motived by the intrinsic information \cite{tucci2002entanglement,christandl2004squashed}, it is monogamous for arbitrary dimensional systems \cite{christandl2004squashed} and faithful \cite{brandao2011faithful}.  Moreover, the way to measure the entanglement can also be defined based on the geometrical ways, such as, the geomeric measure of entanglement \cite{barnum2001monotones,wei2003geometric}, the quantum relative entropy \cite{vedral1997quantifying} and the fidelity \cite{streltsov2010linking}. However, there are few results of the measure based on the distance between an entangled state and the set of separable states under the trace norm. \par
 Trace norm not only can quantify the entanglement for a multipartitie mixed state, but also it provides a way to quantify the discord \cite{paula2013geometric,montealegre2013one,ciccarello2014toward,roga2016geometric}, the measurement-induced nonlocality \cite{hu2015measurement}, asymmetry \cite{marvian2014extending}, steering \cite{sainz2018formalism} and coherence \cite{baumgratz2014quantifying,rana2016trace, chen2018notes}. It is also helpful in quantum communication \cite{pirandola2017fundamental,pirandola2020advances}  and quantum algorithms \cite{gebhart2021quantifying,bai2022quantum}.
 \par
This article is organized as follows. In section \ref{se2}, first we present the preliminary knowledge needed. We also present some properties of the modified entanglement measure. In section \Rmnum{3}, we consider the properties of the entanglement mesurement in terms of the trace norm under two methods. Specifically, in section \ref{su2}, we present the solutions of the modified measure for pure states, and then we generalize it to a measure for tripartite mixed states $D^{'}_{fsep}(\cdot)$, which tells the distance between a tripartite state and the cone of full separable states.  In section \ref{su3}, we consider the measure based on the trace norm under the method proposed in \cite{gour2020optimal,shi2021extension}. In section \ref{se4}, we end with a conclusion.
\section{Preliminary Knowledge}\label{se2}
\indent In the following, we denote $\mathcal{D}_{A}$ as the set of states on $\mathcal{H}_{A}$ with finite dimensions. Next for a state $\rho_{AB}\in \mathcal{D}_{AB}$, if it can be written as $\rho_{AB}=\sum_i p_i\ket{\psi_i}_A\bra{\psi_i}\otimes\ket{\phi_i}_B\bra{\phi_i}$, then it is separable, otherwise, it is an entangled state. We denote $\sep_{AB}$ as the set of separable states on $\mathcal{H}_{AB}$ with finite dimensions.  And we may leave out the systems in the absence of ambiguity in the following. \par
Assume $\ket{\psi}_{AB}$ is a bipartite pure state in $\mathcal{H}_{AB}$, $\ket{\psi}_{AB}=\sum_i\sqrt{\lambda_i}\ket{ii}$, then its concurrence is defined as
\begin{align}
C(\ket{\psi}_{AB})=\sqrt{2(1-\tr\rho_A^2)},
\end{align}
here $\rho_A$ is the reduced density matrix of $\rho_{AB}.$ When $\rho_{AB}$ is a mixed state, then 
\begin{align}
C(\rho_{AB})=\min_{\{p_i,\ket{\psi}_{AB}\}}\sum_i p_iC(\ket{\psi_i}_{AB}),
\end{align}
where the minimum takes over all the decompositions $\{p_i,\ket{\psi_i}\}$ of $\rho_{AB}$. \par
Next we recall the definition of a distance measure in terms of trace norm for a bipartite mixed state. Assume $\rho\in \mathcal{D}_{AB}$, its distance measure in terms of trace norm is defined as,
	\begin{align}
	D_{sep}(\rho)=\min_{\b\in \sep}\norm{\rho-\b}_1, \label{e0}
	\end{align}
here $\b$ takes over all the separable states in the system $\mathcal{H}_A\otimes \mathcal{H}_B.$ \par
Then we recall $E:\mathcal{D}(\mathcal{H}_{AB})\rightarrow\mathbb{R}^{+}$ is an entanglement measure \cite{vedral1997quantifying} if it satifies:
\begin{itemize}
	\item [(\MakeUppercase{\romannumeral1})] $E(\rho_{AB})=0$ if $\rho_{AB}\in \mathcal{S}(\mathcal{H}_{AB})$.
	\item [(\MakeUppercase{\romannumeral2})]$E$ does not increase under the local operation and classical communication (LOCC),
	\begin{align*}
	E(\Psi(\rho_{AB}))\le E(\rho_{AB}),
	\end{align*} 
	here $\Psi$ is LOCC. 
\end{itemize}\par
In \cite{vidal2000entanglement}, the author presented that when $E$ satisfies the following two conditions, $E$ is an entanglement monotone,
\begin{itemize}
	\item [(\MakeUppercase{\romannumeral3})]$E(\rho)\ge \sum_k p_kE(\sigma_k),$ here $\sigma_{k}=\frac{\mathcal{E}_{i,k}(\rho_{AB})}{p_k},$ $p_k=\tr\mathcal{E}_{i,k}(\rho_{AB})$, $\mathcal{E}_{i,k}$ is any unilocal quantum operation performed by any party $A$ or $B,$ and $i$ are on behave of the party $A$ or $B.$
	\item [(\MakeUppercase{\romannumeral4})] For any decomposition $\{p_k,\rho_k\}$ of $\rho_{AB}$
	\begin{align*}
	E(\rho)\le \sum_k p_kE(\rho_k).
	\end{align*}
\end{itemize}
\par The condition (\MakeUppercase{\romannumeral4}) can also be regarded as the convexity of an entanglement measure. Next we recall the simplified conditions when the function satisifies the LOCC monotone \cite{horodecki2005simplifying}: \emph{For a convex function $f$ does not increase under LOCC if and only if
	\begin{itemize}
		\item[1]. $f$ satisfies local unitary invariance (LUI)
		\begin{align}
		f(U_A\otimes U_B\rho_{AB} U_A^{\dagger}\otimes U_B^{\dagger})=f(\rho_{AB}) \label{w0}
		\end{align}
		\item[2]. $f$ satisfies 
		\begin{align}
		f(\sum_i p_i \rho_{AB}^i\otimes\ket{i}_X\bra{i})=\sum_i p_i f(\rho_{AB}^i), \label{w1}
		\end{align}
		for $X=A^{'}, B^{'}$, where $\ket{i}$ are local orthogonal flags.
\end{itemize}}
\par

The properties (\MakeUppercase{\romannumeral1}), (\MakeUppercase{\romannumeral2}) and (\MakeUppercase{\romannumeral4}) are satisfied by $D_{sep}(\cdot)$ \cite{vedral1997quantifying,chen2014comparison}, however, the property (\MakeUppercase{\romannumeral3}) for the entanglement measure $D_{sep}(\cdot)$ is not valid \cite{qiao2017activation}. 
 Then we recall the modified version $D_{sep}^{'}(\cdot)$ of the entanglement measure $D_{sep}(\cdot)$, 
 \begin{align}
 D^{'}_{sep}(\rho_{AB})=\min_{\s\in SEP,\lambda>0}\norm{\rho-\lambda\s}_1.\label{mem}
 \end{align}
 \par Due to the properties of $\norm{\cdot}_1$ and the definition of $D_{sep}^{'}(\cdot),$ if we could prove 
\begin{align}
D_{sep}^{'}(\rho)\ge \sum_k p_kD^{'}_{sep}(\rho_{k}),\label{m0}\\
p_k=\tr\mathcal{E}_k(\rho_k),\hspace{3mm} \rho_k=\mathcal{E}_k(\rho_k)/p_k,\nonumber
\end{align}
here $\{\mathcal{E}_k\}$ is a quantum operation on the party $A$ at some stage ($k$ labels different outcomes if at some stage A performs a measurement), $D^{'}_{sep}(\cdot)$ satisfies the property (\MakeUppercase{\romannumeral3}). Next we recall the Naimark theorem \cite{paulsen2002completely}:\par
\emph{Assume $\{F_i\}_{i=1}^n$ is a POVM acting on a Hilbert space $\mathcal{H}_A$ with dimension $d_A$, then there exists a projection-valued measures (PVM) $\{\Pi_i\}_{i=1}^n$ acting on a Hilbert space $\mathcal{H}_{A^{'}}$ with dimension $d_{A^{'}}$, such that
	\begin{align}
	F_i=V^{\dagger}\Pi_i V,
	\end{align}
	A method is via the direct sum, requiring 
	\begin{align}
	\tr[F_i\rho]=\tr[P_i(\rho\oplus 0)],
	\end{align}
	here $\rho$ is an arbitrary operator in $\mathcal{H}_A$, $0$ is the zero matrix with dimension $d_A^{'}-d_A$.
} As the local operations $\mathcal{E}$, \emph{addition of an uncorrelated ancilla system and the dismissal of a local part of the whole system} satisfy
\begin{align*}
D^{'}_{sep}(\rho)\ge D^{'}_{sep}(\mathcal{E}(\rho)),
\end{align*}
the above inequality is due to the definition of $D_{sep}^{'}(\cdot)$ and the monotonicity of the trace norm under $\mathcal{E}.$ Combing the Naimark theorem, we only need to prove (\ref{m0}) is valid for the PVM, that is, we need to prove that
\begin{align}
D_{sep}^{'}(\rho)\ge \sum_k p_kD^{'}_{sep}(\rho_{k}),\label{m1}
\end{align}
here $p_k=\tr\Pi_k\rho,$ $\rho_{k}=\frac{\Pi_k\rho\Pi_k}{p_k}$, $\{\Pi_k\}$ is a PVM on the $A$-party.
\begin{align}
D_{sep}^{'}(\rho)=&\norm{\rho-\sigma}_1\nonumber\\
\ge &\norm{\sum_k(\Pi_k\rho\Pi_k-\Pi_k\sigma\Pi_k)}_1\nonumber\\
=&\sum_k\norm{\Pi_k\rho\Pi_k-\Pi_k\sigma\Pi_k}_1\nonumber\\
=&\sum_kp_k\norm{\frac{\Pi_k\rho\Pi_k}{p_k}-\sigma_{k}}_1\nonumber\\
\ge&\sum_kp_kD^{'}_{sep}(\rho_{k}). \label{m2}
\end{align}
Here in the first equality, $\sigma$ is the optimal in terms of $D^{'}_{sep}(\cdot)$ for $\rho$. In the first inequality, $\{\Pi_k\}$ is a PVM on the $A$-party of $\rho_{AB}$, $\sum_k\Pi_k=I$, then due to the contractive under the trace-preserving quantum operations, the first inequality is valid. In the third equality, $p_k=\tr \Pi_k \rho\Pi_k,$ $\sigma_{k}=\frac{\Pi_k\sigma\Pi_k}{p_k},$ here the trace of $\sigma_k$ may be not 1, that is, $\sigma_k$ may be not a state. As $\s_k$ a separable state, and due to the definition of $D_{sep}(\cdot),$ the last inequality is valid. We finish the proof of the inequality $(\ref{m2})$.
\par
At last, we prove the convexity of $D_{sep}^{'}(\cdot)$.
Assume $\{p_k,\rho_k\}$ is an arbitrary decompostion of $\rho_{AB}$, $\sigma_{k}$ is the optimal for the state $\rho_{k}$ in terms of $D^{'}_{sep}(\cdot)$, then 
\begin{align}
\sum_k p_kD^{'}_{sep}(\rho_k)=&\sum_kp_k ||\rho_k-\sigma_{k}||_1\nonumber\\
\ge&\norm{\sum_k(p_k\rho_k-p_k\sigma_{k})}_1\nonumber\\
\ge&\norm{\rho-\sigma}_1=D^{'}_{sep}(\rho) ,
\end{align}
the first equality is due to the definition of $(\ref{mem}),$ the first inequality is due to the convexity of the $\norm{\cdot}_1$. In the second inequality, we denote $\sigma$ as the optimal in terms of $D_{sep}^{'}(\cdot)$ for $\rho$, and the second inequality is due to the definition of $D_{sep}^{'}(\cdot).$ Then we prove the property (\MakeUppercase{\romannumeral4}) is satisfied by the entanglement measure $D_{sep}(\cdot)$.\par
\section{Main Results}
\indent  In this section, we will consider the entanglement measures based on trace norm in terms of two methods, the modified distance measure and the extended distance measure. First  we present  the analytical formula of a class of entangled states in terms of the modified entanglement measure, we also present some properties of the modified measure. At last, we consider the extended measure generated under the method presented in \cite{gour2020optimal,shi2021extension} and show this measure is monogamous for $n$-qubit systems .

 \subsection{ The modified distance measure }\label{su2}
\indent In this section, we first rewrite the definition of modified distance measure $(\ref{mem})$ as follows,
 \begin{align}
 D_{sep}^{'}(\rho)=\min_{\sigma}\norm{\rho-\sigma}_{1},
 \end{align}
 here $\sigma$ takes over all the elements in the set $\{\sigma|\sigma=\sum_i q_i\omega_i^A\otimes\zeta_i^B, q_i\ge 0, \omega_i^A\in D(\mathcal{H}_A), \zeta_i^B\in D(\mathcal{H}_B)\}$. In \cite{yu2016alternative}, the authors presented the method to quantify the coherence in this way. In \cite{johnston2018modified}, they showed that the modified distance of almost all pure states is a constant, and they numerically showed that a similar result occurs for states with a fixed rank. Here we showed that the same result occurs for the entangled pure states.
 \begin{theorem}\label{t3}
 	Assume $\ket{\psi}=\sum_i\lambda_i\ket{ii}$ is an arbitrary pure state, here $\lambda_i\ge \lambda_{i+1},$then the modified distance measure of $\ket{\psi}$ is 
 	\begin{align}
 	D^{'}_{sep}(\ket{\psi})=\left\{
 	\begin{aligned}
 	1\hspace{7mm} & , & \lambda_1\le\frac{1}{\sqrt{2}}, \\
 	2\lambda_1\sqrt{1-\lambda_1^2} & , & \lambda_1>\frac{1}{\sqrt{2}}.
 	\end{aligned}
 	\right.
 	\end{align}
 \end{theorem}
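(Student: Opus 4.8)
My plan is to pass to the convex dual of this conic minimisation and to evaluate the dual exactly. Set $\rho=\proj{\psi}$, $\mu=\sqrt{1-\lambda_1^2}$ and $\ket{r}=\frac{1}{\mu}\sum_{i\ge2}\lambda_i\ket{ii}$, so that $\ket{\psi}=\lambda_1\ket{00}+\mu\ket{r}$. Using $\norm{X}_1=\max\{\tr(WX):W=W^{\dagger},\ \norm{W}_{\infty}\le1\}$ together with the fact that the feasible $\sigma$ range over the cone $K=\{\sum_i q_i\,\omega_i^A\otimes\zeta_i^B:q_i\ge0\}$, the key observation is that $\sup_{\sigma\in K}\tr(W\sigma)$ is finite (and equal to $0$) precisely when $\bra{ab}W\ket{ab}\le0$ for every product vector $\ket{a}\ket{b}$, i.e.\ when $-W$ is block-positive. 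This already gives a clean one-sided bound: for any such block-negative $W$ with $\norm{W}_{\infty}\le1$ and any $\sigma\in K$,
\begin{align}
\norm{\rho-\sigma}_1\ge\tr\big(W(\rho-\sigma)\big)=\bra{\psi}W\ket{\psi}-\tr(W\sigma)\ge\bra{\psi}W\ket{\psi},\nonumber
\end{align}
so $D^{'}_{sep}(\ket{\psi})\ge\bra{\psi}W\ket{\psi}$ for every feasible $W$, with no appeal to strong duality.

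The lower bound is then obtained by exhibiting one good witness in each regime. When $\lambda_1\le\frac{1}{\sqrt2}$ I would take $W=2\proj{\psi}-I$: since the largest overlap $|\braket{ab}{\psi}|$ of a product vector with $\ket{\psi}$ equals its top Schmidt coefficient $\lambda_1$, block-negativity reads $2\lambda_1^2-1\le0$, which holds, and $\bra{\psi}W\ket{\psi}=1$. When $\lambda_1>\frac{1}{\sqrt2}$ I would take $W=2\proj{\Phi}-I$ with $\ket{\Phi}=\frac{1}{\sqrt2}(\ket{00}+\ket{r})$; its Schmidt coefficients are $\frac{1}{\sqrt2}$ and $\frac{\lambda_i}{\sqrt2\,\mu}$ $(i\ge2)$, all bounded by $\frac{1}{\sqrt2}$ because $\lambda_i\le\mu$, so $W$ is block-negative with $\norm{W}_{\infty}=1$ and $\bra{\psi}W\ket{\psi}=2|\braket{\Phi}{\psi}|^2-1=2\lambda_1\mu$. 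This yields $D^{'}_{sep}\ge1$ and $D^{'}_{sep}\ge2\lambda_1\mu$ respectively.

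For the matching upper bound, the case $\lambda_1\le\frac{1}{\sqrt2}$ is immediate, since $\sigma=0\in K$ gives $\norm{\rho}_1=1$. For $\lambda_1>\frac{1}{\sqrt2}$ I would invoke strong duality (Sion's minimax theorem, the witness ball being convex and compact) to write $D^{'}_{sep}(\rho)=\max\{\bra{\psi}W\ket{\psi}\}$ over feasible $W$, and then bound this maximum from above by relaxing all constraints except the two that survive on the plane $V=\mathrm{span}\{\ket{00},\ket{r}\}$. Since $P\ket{\psi}=\ket{\psi}$ for $P$ the projector onto $V$, we have $\bra{\psi}W\ket{\psi}=\bra{\psi}PWP\ket{\psi}$ with $\norm{PWP}_{\infty}\le1$ and $\bra{00}(PWP)\ket{00}\le0$; writing $PWP=\begin{pmatrix}w_1&c\\\bar c&w_2\end{pmatrix}$ in the basis $\{\ket{00},\ket{r}\}$ reduces everything to maximising $\lambda_1^2w_1+2\lambda_1\mu\,\mathrm{Re}\,c+\mu^2w_2$ subject to $w_1\le0$ and $\norm{PWP}_{\infty}\le1$. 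Elementary $2\times2$ analysis (optimum at $w_1=0$, $c=1$, $w_2=0$ once $\lambda_1>\mu$) gives exactly $2\lambda_1\mu$, closing the gap.

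The main obstacle I anticipate is the upper bound for $\lambda_1>\frac{1}{\sqrt2}$. The attractive ``two-qubit'' choice $\sigma\propto\proj{00}+\proj{r}$ is \emph{not} admissible because $\ket{r}$ is itself entangled, so the rank-two residual computation that works for genuine two-qubit states does not transfer; this is precisely what forces either the duality route above or, if one insists on a primal construction, an explicit separable $\sigma$ spread over product vectors whose $\ket{ii}\bra{jj}$ coherences cancel the cross terms without inflating the residual elsewhere. Justifying strong duality for the non-compact cone $K$ (equivalently, certifying that the $2\times2$ relaxation is attained by a genuinely separable $\sigma$) is the one step that will need care.
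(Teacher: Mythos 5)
Your proof is correct, but it takes a genuinely different route from the paper's. The paper works entirely on the primal side: it twirls the candidate $\sigma$ over diagonal unitaries $U\otimes\overline{U}$ (which fix $\proj{\psi}$) to force the optimal $\sigma$ into a sparse form supported on $\ket{ii}\bra{jj}$ and $\ket{ij}\bra{ij}$, uses the PPT constraint $|\varrho_{iijj}|\le\sqrt{|\varrho_{ijij}\varrho_{jiji}|}$ together with the triangle inequality to discard the off-diagonal blocks, and thereby reduces the problem to the modified trace-norm coherence of $\sum_i\sqrt{\lambda_i}\ket{i}$, quoting the result of Johnston et al.\ for the final evaluation. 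You instead work on the dual side with entanglement witnesses: the lower bounds via $2\proj{\psi}-I$ and $2\proj{\Phi}-I$ are elementary, self-contained, need no duality at all, and come with explicit optimality certificates; your upper bound in the regime $\lambda_1>\tfrac{1}{\sqrt2}$ does hinge on strong duality, but the step you flag as delicate is in fact fine --- Sion's minimax theorem requires compactness of only one of the two sets, and the witness ball $\{W:\norm{W}_\infty\le1\}$ is compact, so the non-compactness of the cone $K$ is harmless (alternatively one could close the gap primally by exhibiting a separable $\sigma$, which is in effect what the paper's reduction to the coherence problem accomplishes). The trade-off: the paper's argument leans on an external coherence result but exhibits the structure of the optimal $\sigma$; yours is self-contained, produces the optimal witness, and the $2\times2$ compression onto $\mathrm{span}\{\ket{00},\ket{r}\}$ makes it transparent why only $\lambda_1$ enters the final answer.
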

 We present the proof of this theorem in Sec. \ref{App}.\par
\begin{remark}
	In \cite{regula2019one}, the authors considered the same problem for the pure states in terms of $D_{sep}^{'}(\cdot)$. The method there is due to the convex analysis, ours is different from there. Moreover, comparing with the result in \cite{regula2019one}, ours is more apparent. And we can also generalize our methods to a class of mixed states.
\end{remark}
\begin{corollary}
	Assume $\rho=a_0\ket{00}\bra{00}+a_1\ket{00}\bra{11}+\overline{a_1}\ket{11}\bra{00}+a_2\ket{11}\bra{11}$ is a two-qubit mixed state, then $D_{sep}^{'}(\rho)=2|a_{1}|$.
\end{corollary}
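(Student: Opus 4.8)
The plan is to prove the two matching bounds $D_{sep}^{'}(\rho)\le 2\abs{a_1}$ and $D_{sep}^{'}(\rho)\ge 2\abs{a_1}$ separately.

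For the upper bound I would simply exhibit a feasible separable operator. Since $\rho$ is supported on $\lin\{\ket{00},\ket{11}\}$, the natural choice is $\sigma_0=a_0\proj{00}+a_2\proj{11}=a_0\proj{0}\otimes\proj{0}+a_2\proj{1}\otimes\proj{1}$, which lies in the cone $\{\sigma=\sum_i q_i\omega_i^A\otimes\zeta_i^B:q_i\ge 0\}$ because $a_0,a_2\ge 0$ (they are diagonal entries of the state $\rho$). Then $\rho-\sigma_0=a_1\ketbra{00}{11}+\overline{a_1}\ketbra{11}{00}$, whose only nontrivial block $\bigl(\begin{smallmatrix}0&a_1\\\overline{a_1}&0\end{smallmatrix}\bigr)$ has singular values $\abs{a_1},\abs{a_1}$; hence $\norm{\rho-\sigma_0}_1=2\abs{a_1}$ and $D_{sep}^{'}(\rho)\le 2\abs{a_1}$.

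The lower bound is the substantive part, and I would obtain it from the dual formulation of the trace norm, $\norm{M}_1=\max\{\tr[WM]:W=W^{\dagger},\norm{W}_\infty\le 1\}$ for Hermitian $M$. It suffices to construct one Hermitian $W$ with $\norm{W}_\infty\le 1$ such that (i) $\tr[W\sigma]\le 0$ for every $\sigma$ in the separable cone, and (ii) $\tr[W\rho]=2\abs{a_1}$; indeed (i) and (ii) give, for every feasible $\sigma$, $\norm{\rho-\sigma}_1\ge\tr[W(\rho-\sigma)]=\tr[W\rho]-\tr[W\sigma]\ge 2\abs{a_1}$, whence $D_{sep}^{'}(\rho)\ge 2\abs{a_1}$. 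My candidate is $W=e^{i\phi}\ketbra{00}{11}+e^{-i\phi}\ketbra{11}{00}-\proj{01}-\proj{10}$ with $\phi=\arg a_1$. The norm condition $\norm{W}_\infty\le 1$ is read off from its block structure (eigenvalues $\pm 1$ on $\lin\{\ket{00},\ket{11}\}$ and $-1,-1$ on $\lin\{\ket{01},\ket{10}\}$), and property (ii) is immediate since $\rho$ has no weight on $\ket{01},\ket{10}$, giving $\tr[W\rho]=2\,\mathrm{Re}(e^{-i\phi}a_1)=2\abs{a_1}$.

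The step I expect to be the main obstacle is verifying (i), i.e.\ that $W$ is nonpositive on all product vectors, since then $\tr[W\sigma]=\sum_i q_i\bra{a_i}\bra{b_i}W\ket{a_i}\ket{b_i}\le 0$ for the whole cone. Writing $\ket{a}=x\ket{0}+y\ket{1}$ and $\ket{b}=u\ket{0}+v\ket{1}$, the quadratic form reduces to $2\,\mathrm{Re}(e^{i\phi}\overline{xu}\,yv)-\abs{xv}^2-\abs{yu}^2$, which is $\le 0$ by AM--GM, because $2\abs{xu}\abs{yv}=2\abs{xv}\abs{yu}\le\abs{xv}^2+\abs{yu}^2$. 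The real difficulty is locating this witness in the first place: the bare coherence operator $\ketbra{00}{11}+\mathrm{h.c.}$ is \emph{not} nonpositive on product states, and the key idea is that one is free to append negative weight $-\proj{01}-\proj{10}$ on the complementary sector---terms invisible to $\tr[W\rho]$---tuned exactly so that the AM--GM inequality closes. With the witness in hand the two bounds coincide and the corollary follows.
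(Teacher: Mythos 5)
Your proof is correct, but it proceeds by a genuinely different route from the paper's. The paper does not give a self-contained argument for this corollary: it reruns the twirling argument from Theorem 1 (averaging over the diagonal local unitaries $U\otimes\overline{U}$ that stabilize $\rho$, arguing the optimizer may be taken invariant, and using the PPT-type bound $\abs{\varrho_{iijj}}\le(\varrho_{ijij}+\varrho_{jiji})/2$ to strip off the off-diagonal blocks), and then reduces to the modified trace distance of coherence, quoting Theorem 1 of Johnston et al.\ for the final value. You instead give a primal--dual certificate: the feasible point $\sigma_0=a_0\proj{00}+a_2\proj{11}$ for the upper bound, and the witness $W=e^{i\phi}\ketbra{00}{11}+e^{-i\phi}\ketbra{11}{00}-\proj{01}-\proj{10}$ for the lower bound, with nonpositivity on product vectors closed by AM--GM. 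All steps check: $\norm{W}_\infty=1$, $\tr[W\rho]=2\abs{a_1}$, and $\tr[W\sigma]\le 0$ extends from pure products to the whole cone $\{\sum_i q_i\omega_i^A\otimes\zeta_i^B\}$ by linearity (and is unaffected by the scale factor $\lambda>0$ in the definition of $D_{sep}^{'}$), so Hölder gives $\norm{\rho-\sigma}_1\ge\tr[W(\rho-\sigma)]\ge 2\abs{a_1}$. Note that your AM--GM step $2\abs{xv}\abs{yu}\le\abs{xv}^2+\abs{yu}^2$ is exactly the inequality the paper extracts from the PPT condition, just repackaged as a witness. What each approach buys: the paper's symmetry reduction is the engine that handles the general pure-state Theorem 1 (arbitrary Schmidt rank, where the answer is a nontrivial piecewise function and no single simple witness suffices), whereas your argument is elementary, avoids both the twirling machinery and the external coherence result, and yields an explicit optimality certificate for this rank-two family.
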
\par
\indent The main proof of the above corollary is the same as the proof for the pure states above, and by combing the theorem 1 in \cite{johnston2018modified}, we can get the corollary. Next we present some properties of the modified distance measure. 
\begin{theorem}
	Assume $\rho_{AB}$ and $\sigma_{AB}$ are two mixed states with $\norm{\r_{AB}-\s_{AB}}\le\epsilon$, then we have 
	\begin{align}
	|D^{'}_{sep}(\rho_{AB})-D^{'}_{sep}(\s_{AB})|\le \epsilon
	\end{align}
\end{theorem}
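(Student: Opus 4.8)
The plan is to exploit the fact that $D'_{sep}$ is, by its definition (\ref{mem}), the trace-norm distance from a state to a \emph{fixed} set --- namely the cone $\{\lambda\sigma:\sigma\in SEP,\ \lambda>0\}$ of subnormalized separable operators --- and that distance to a fixed set is automatically $1$-Lipschitz with respect to the underlying norm. The crucial structural observation is that the feasible set over which the minimization runs does not depend on the input state, so any (near-)optimal operator for one state is an admissible competitor for the other; a single application of the triangle inequality then closes the gap. I read the hypothesis $\norm{\r_{AB}-\s_{AB}}\le\epsilon$ as the trace-norm distance $\norm{\r_{AB}-\s_{AB}}_1\le\epsilon$, consistent with the rest of the section.

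Concretely, I would first let $\tau$ be an element of the separable cone achieving the minimum for $\rho_{AB}$, so that $D'_{sep}(\rho_{AB})=\norm{\rho_{AB}-\tau}_1$. Since $\tau$ lies in the very same cone appearing in the definition of $D'_{sep}(\sigma_{AB})$, it is a feasible candidate there, whence
\begin{align}
D'_{sep}(\sigma_{AB})\le\norm{\sigma_{AB}-\tau}_1.
\end{align}
Applying the triangle inequality for $\norm{\cdot}_1$ together with the hypothesis then gives
\begin{align}
\norm{\sigma_{AB}-\tau}_1\le\norm{\sigma_{AB}-\rho_{AB}}_1+\norm{\rho_{AB}-\tau}_1\le\epsilon+D'_{sep}(\rho_{AB}),
\end{align}
so that $D'_{sep}(\sigma_{AB})-D'_{sep}(\rho_{AB})\le\epsilon$.

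Finally, interchanging the roles of $\rho_{AB}$ and $\sigma_{AB}$ yields the reverse bound $D'_{sep}(\rho_{AB})-D'_{sep}(\sigma_{AB})\le\epsilon$, and combining the two one-sided estimates produces $|D'_{sep}(\rho_{AB})-D'_{sep}(\sigma_{AB})|\le\epsilon$. I do not anticipate a genuine obstacle here: the entire argument rests on the admissibility of the single optimizer $\tau$ for both minimizations, which holds precisely because the separable cone is state-independent. The only technical point worth guarding is whether the minimum in (\ref{mem}) is actually attained; if one prefers not to assume attainment, one simply takes $\tau$ to be a $\delta$-optimal candidate for $\rho_{AB}$, carries the extra additive $\delta$ through the two inequalities, and lets $\delta\to 0$ at the end.
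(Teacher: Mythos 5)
Your proposal is correct and follows essentially the same route as the paper: both arguments use the state-independence of the separable cone to reuse the (near-)optimal operator for one state as a feasible candidate for the other, and then apply the triangle inequality for $\norm{\cdot}_1$ (the paper phrases this via the reverse triangle inequality after assuming without loss of generality which difference is nonnegative, while you obtain the two one-sided bounds by symmetry). Your remark about $\delta$-optimal candidates in case the infimum is not attained is a small but worthwhile refinement that the paper glosses over.
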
\par

\begin{proof}
	Here we can always assume $D^{'}_{sep}(\rho_{AB})-D^{'}_{sep}(\s_{AB})\ge0$, then we have
	\begin{align}
	&D^{'}_{sep}(\r_{AB})-D^{'}_{sep}(\s_{AB})\nonumber\\
	\le&\norm{\r_{AB}-\a_{AB}}_1-\norm{\s_{AB}-\a_{AB}}_1\nonumber\\
	\le&\norm{\rho_{AB}-\s_{AB}}_1\le \epsilon.
	\end{align}
	Here $\a_{AB}$ is the optimal for $\s_{AB}$. The first inequality is due to the definition of $D^{'}_{sep}(\cdot)$, as $\a_{AB}$ may not be the optimal for $\r_{AB}$, the second inequality is due to the triangle inequality
	\begin{align*}
	&\norm{\r_{AB}-\a_{AB}}_1-\norm{\s_{AB}-\a_{AB}}_1\nonumber\\
	\le&\norm{\r_{AB}-\a_{AB}-\s_{AB}+\a_{AB}}_1\nonumber\\
	=&\norm{\r_{AB}-\s_{AB}}_1
	\end{align*} 
\end{proof}

 At the end of this subsection, we generalize the measure $D^{'}_{sep}(\rho)$ to a quantity $D^{'}_{fsep}(\cdot)$ for a tripartite system which tells the distance from a tripartite state to the set of fully separable states. Next we recall the fully separable states. A pure state $\ket{\psi}_{ABC}$ is fully separable if it can be written as $\ket{\psi}_{ABC}=\ket{\phi_1}_A\ket{\phi_2}_B\ket{\phi_3}_C$.  If a mixed state $\rho_{ABC}$ can be written as
\begin{align*}
\rho_{ABC}=\sum_{i=1}^kp_i\rho_A^i\otimes\rho_B^i\otimes\rho_C^i,
\end{align*} then $\rho_{ABC}$ is a fully separable state. The definition of $D^{'}_{fsep}(\rho_{ABC})$ is defined as follows,
\begin{align}
D^{'}_{fsep}(\rho_{ABC})=\inf_{\gamma\in FSEP^{'}} \norm{\rho-\gamma}_1,
\end{align}
here the infimum takes over all the elements in $\{\gamma|\gamma=\sum_{i=1}^kp_i\rho_A^i\otimes\rho_B^i\otimes\rho_C^i, p_i\ge 0, \rho_A^{i}\in\mathcal{D}(\mathcal{H}_A),\rho_B^{i}\in\mathcal{D}(\mathcal{H}_B),\rho_C^{i}\in\mathcal{D}(\mathcal{H}_C),i=1,2,\cdots,k.\}$. 
\par
In the following, we will compute the values $D^{'}_{fsep}(\ket{W})$ and $D^{'}_{fsep}(\ket{GHZ}).$
\begin{example}\label{Ex1}
	Assume 
	\begin{align}
	\ket{W}=&\frac{1}{\sqrt{3}}(\ket{001}+\ket{010}+\ket{100}),\nonumber\\\ket{GHZ}=&\frac{1}{\sqrt{2}}(\ket{000}+\ket{111}).\nonumber
	\end{align} Then $D^{'}_{fsep}(W)=1, D^{'}_{fsep}(GHZ)=1.$
\end{example}
\par
We place the proof of Example \ref{Ex1} in Sec. \ref{App}.\\

\subsection{The extended distance measure}\label{su3}
\indent Here we consider another entanglement measure based on the trace norm under the method presented in \cite{gour2020optimal,shi2021extension}.  There the authors presented a way to generate the entanglement measure $\overline{E}$ for the mixed states from the entanglement measure $E$ on pure states.  Assume $\rho_{AB}\in B(\mathcal{H_A\otimes H_B})$, \begin{align}
\overline{E}(\rho_{AB})=\inf_{\ket{\psi}_{AB}\in \mathcal{R}(\rho_{AB})} E(\ket{\psi}_{AB}),\label{Ev} 
\end{align} 
here the infimum takes over all the pure states in the set 
\begin{align*}
\mathcal{R}(\rho_{AB})=\{ \psi_{AB}&\in \mathcal{H}_{AB}|\rho_{AB}=\Lambda(\psi_{AB}),\Lambda\in\mathcal{T},\\ &\textit{$\mathcal{T}$ is a class of quantum channels. }\},
\end{align*}
 Here we specifically consider the properties of the entanglement measures based on the trace norm under the method above,
\begin{align}
\overline{D}_{sep}(\rho_{AB})=&\inf_{\ket{\psi}_{AB}\in \mathcal{R}(\rho_{AB})} D_{sep}(\ket{\psi}_{AB}),\nonumber\\
\mathcal{R}(\rho_{AB})=&\{ \psi_{AB}\in \mathcal{H}_{AB}|\rho_{AB}=\Lambda(\psi_{AB}),\Lambda\in LOCC.\}. \label{m3}
\end{align}
\par Then we show the above quantity is an entanglement measure. When $\r_{AB}$ is a mixed state, $\ket{\psi}_{AB}$ is the optimal pure state in (\ref{m3}), assume $\Lambda$ is LOCC, then
\begin{widetext}
\begin{align*}
\overline{D}_{sep}(\rho_{AB})=&\inf_{\ket{\psi}_{AB}}\{D_{sep}(\ket{\psi}_{AB})|\rho_{AB}=\Lambda(\psi_{AB}),\Lambda\in LOCC.\}\nonumber\\
\ge&\inf_{\ket{\psi}_{AB}}\{D_{sep}(\ket{\psi}_{AB})|\mathcal{M}(\rho_{AB})=\mathcal{M}\circ\Lambda(\psi_{AB}),\Lambda\in LOCC.\}\\
\ge&\inf_{\ket{\psi}_{AB}}\{D_{sep}(\ket{\psi}_{AB})|\mathcal{M}(\rho_{AB})=\Theta(\psi_{AB}),\Theta\in LOCC.\}=\overline{D}_{sep}(\mathcal{M}(\rho_{AB})),
\end{align*}
\end{widetext}
here $\mathcal{M}(\cdot)$ is LOCC. Hence, the property (\MakeUppercase{\romannumeral2}) is valid. As a separable pure state can be transformed into a separable mixed state through LOCC, then the quantity $\overline{D}_{sep}(\cdot)$ of a separable state is 0. Hence, it is an entanglement measure.
\par 
 Next we consider the mixed states on two-qubit systems in terms of $\overline{D}_{sep}(\cdot).$

\begin{theorem}
	When $\rho_{AB}$ is a two-qubit mixed state, $\overline{D}_{sep}(\rho_{AB})$ is an entanglement monotone. Furthermore, 
	\begin{align*}
	\overline{D}_{sep}(\rho_{AB})=C(\rho_{AB})=\min_{\{p_i,\ket{\psi_i}_{AB}\}}\sum_i p_i\overline{D}_{sep}(\ket{\psi_i}_{AB}),
	\end{align*}
	where the minimum takes over all the decompositions $\{p_i,\ket{\psi_i}_{AB}\}$
of $\rho_{AB}.$ Furthermore, assume $\rho_{A|B_1B_2\cdots B_{n-1}}$ is an $n$-qubit mixed state, then 
\begin{align}
\overline{D}_{sep}^2(\rho_{A|B_1B_2\cdots B_{n-1}})\ge \sum_{i=1}^{n-1}\overline{D}_{sep}^2(\rho_{AB_i}).
\end{align}
\end{theorem}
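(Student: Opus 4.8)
The plan is to reduce the entire theorem to a single pure-state computation: that for a two-qubit pure state the trace-norm distance $D_{sep}$ coincides with the concurrence. Write $\ket{\psi}=\sqrt{\lambda_1}\ket{00}+\sqrt{\lambda_2}\ket{11}$ with $\lambda_1+\lambda_2=1$, so $C(\ket{\psi})=2\sqrt{\lambda_1\lambda_2}$. I would first prove the lemma $D_{sep}(\ket{\psi})=C(\ket{\psi})$. For the upper bound take the explicit separable state $\sigma=\lambda_1\proj{00}+\lambda_2\proj{11}$; then $\proj{\psi}-\sigma$ is supported on the two-dimensional space spanned by $\ket{00}$ and $\ket{11}$, with eigenvalues $\pm\sqrt{\lambda_1\lambda_2}$, giving $\norm{\proj{\psi}-\sigma}_1=2\sqrt{\lambda_1\lambda_2}$. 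For the matching lower bound I would use trace-norm duality $\norm{\rho-\sigma}_1=\max_{\norm{M}_\infty\le 1}\tr[M(\rho-\sigma)]$ with the fixed witness $M=\ketbra{00}{11}+\ketbra{11}{00}-\proj{01}-\proj{10}$. One checks $\norm{M}_\infty=1$, $\tr[M\proj{\psi}]=2\sqrt{\lambda_1\lambda_2}$, and $\tr[M\proj{ab}]\le 0$ for every product vector $\ket{a}\ket{b}$ by an AM--GM estimate; hence $\tr[M\sigma]\le 0$ for all $\sigma\in\sep$ by convexity, and $\norm{\proj{\psi}-\sigma}_1\ge 2\sqrt{\lambda_1\lambda_2}$ for every separable $\sigma$. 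The same argument applies verbatim to any rank-two pure state, i.e.\ to a pure state across a $2\times d$ cut.

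Next I would identify $\overline{D}_{sep}$ with $D_{sep}$ on pure states. Since $\ket{\psi}\in\mathcal{R}(\ket{\psi})$ via the identity channel, $\overline{D}_{sep}(\ket{\psi})\le D_{sep}(\ket{\psi})$. Conversely, any $\ket{\phi}\in\mathcal{R}(\ket{\psi})$ satisfies $\ket{\phi}\to\ket{\psi}$ by LOCC, so by Nielsen's theorem $\lambda_{\phi}\prec\lambda_{\psi}$, which for two qubits forces $C(\ket{\phi})\ge C(\ket{\psi})$; with the lemma, $D_{sep}(\ket{\phi})=C(\ket{\phi})\ge C(\ket{\psi})=D_{sep}(\ket{\psi})$, so the infimum is attained at $\ket{\psi}$ and $\overline{D}_{sep}(\ket{\psi})=C(\ket{\psi})$.

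For a two-qubit mixed $\rho$ I would prove $\overline{D}_{sep}(\rho)=C(\rho)$ by two inequalities. Lower bound: every $\ket{\psi}\in\mathcal{R}(\rho)$ obeys $\rho=\Lambda(\proj{\psi})$ for some LOCC $\Lambda$, and since the concurrence is an LOCC monotone, $D_{sep}(\ket{\psi})=C(\ket{\psi})\ge C(\Lambda(\proj{\psi}))=C(\rho)$; taking the infimum gives $\overline{D}_{sep}(\rho)\ge C(\rho)$. Upper bound: invoke Wootters' optimal decomposition $\rho=\sum_i p_i\proj{\psi_i}$ in which every $\ket{\psi_i}$ has the same concurrence $C(\rho)$. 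All such $\ket{\psi_i}$ then share the Schmidt coefficients $(\lambda_1,\lambda_2)$ with $2\sqrt{\lambda_1\lambda_2}=C(\rho)$, so $\ket{\psi_i}=(U_i\otimes V_i)\ket{\chi}$ for the seed $\ket{\chi}=\sqrt{\lambda_1}\ket{00}+\sqrt{\lambda_2}\ket{11}$ and local unitaries $U_i,V_i$. Applying $U_i\otimes V_i$ with probability $p_i$ is an LOCC channel mapping $\proj{\chi}$ to $\rho$, so $\ket{\chi}\in\mathcal{R}(\rho)$ and $\overline{D}_{sep}(\rho)\le D_{sep}(\ket{\chi})=C(\ket{\chi})=C(\rho)$. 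This gives $\overline{D}_{sep}(\rho)=C(\rho)$; the convex-roof identity follows because $\overline{D}_{sep}$ and $C$ agree on pure states and $C(\rho)$ is by definition the convex roof of the concurrence, while the entanglement-monotone property is inherited directly from $C$.

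Finally, for the monogamy relation I would combine the two-qubit equality with the concurrence monogamy. By the previous step each reduced two-qubit state gives $\overline{D}_{sep}(\rho_{AB_i})=C(\rho_{AB_i})$. For the cut $A\,|\,B_1\cdots B_{n-1}$ (a $2\times 2^{n-1}$ system) every pure state still has Schmidt rank at most two across the cut, so the rank-two version of the lemma together with the LOCC-monotonicity argument yields only the one-sided bound $\overline{D}_{sep}(\rho_{A|B_1\cdots B_{n-1}})\ge C(\rho_{A|B_1\cdots B_{n-1}})$, which is all that is needed. The Coffman--Kundu--Wootters inequality in its $n$-qubit (Osborne--Verstraete) form, $C^2(\rho_{A|B_1\cdots B_{n-1}})\ge\sum_{i=1}^{n-1}C^2(\rho_{AB_i})$, then gives $\overline{D}_{sep}^2(\rho_{A|B_1\cdots B_{n-1}})\ge C^2(\rho_{A|B_1\cdots B_{n-1}})\ge\sum_i C^2(\rho_{AB_i})=\sum_i\overline{D}_{sep}^2(\rho_{AB_i})$, as claimed. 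I expect the main obstacles to be the pure-state lower bound (choosing a correctly normalized witness and verifying $\tr[M\sigma]\le 0$) and, for the mixed upper bound, the appeal to the equal-concurrence Wootters decomposition; note that the monogamy step itself requires only the easy lower bound for the $2\times 2^{n-1}$ cut plus the known concurrence monogamy.
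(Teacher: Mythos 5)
Your proof is correct and follows essentially the same route as the paper: reduce everything to the pure-state identity $D_{sep}(\ket{\psi})=C(\ket{\psi})$ across a $2\times d$ cut, identify $\overline{D}_{sep}$ with the convex-roof concurrence on two-qubit mixed states, and then apply the Osborne--Verstraete monogamy inequality to the $2\times 2^{n-1}$ cut. The only difference is that you supply self-contained proofs of two ingredients the paper outsources --- the witness/AM--GM argument for $D_{sep}(\ket{\psi})=C(\ket{\psi})$, which the paper cites from the literature, and the equal-concurrence Wootters-decomposition argument giving $\overline{D}_{sep}(\rho)\le C(\rho)$, which the paper's terse chain of equalities (in particular the step $C(\ket{\psi})=\overline{C}(\rho_{AB})$) leaves implicit --- and both of these check out.
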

\begin{proof}
	Assume $\rho_{AB}$ is a two-qubit mixed state, then
	\begin{align*}
	\overline{D}_{sep}(\rho_{AB})=&D_{sep}(\ket{\psi}_{AB})\nonumber\\
	=&C(\ket{\psi}_{AB})\nonumber\\
	=&\overline{C}(\rho_{AB})\\
	=&\min_{\{p_i,\ket{\psi_i}_{AB}\}}\sum_i p_i\overline{D}_{sep}(\ket{\psi_i}_{AB}),
	\end{align*} 
	where the minimum in the last equality takes over all the decompositions $\{p_i,\ket{\psi_i}_{AB}\}$
	of $\rho_{AB}.$  In the first equality, $\ket{\psi}_{AB}$ is the optimal for $\rho_{AB}$ in terms of $\overline{D}_{sep}(\cdot)$, and combing $D_{sep}(\ket{\psi}_{AB})=C(\ket{\psi}_{AB})$ \cite{chen2016quantifying}, we have the second equality is valid. As for any two-qubit pure state $\ket{\phi}_{AB},$ $C(\ket{\phi}_{AB})=D_{sep}(\ket{\phi}_{AB})$ \cite{chen2016quantifying}, and due to the definition of $\overline{D}_{sep}(\cdot)$ and $\overline{C}(\cdot)$, the third equality is valid. Next we show the last inequality,
	\begin{align*}
	\overline{D}^2_{sep}(\rho_{A|B_1B_2\cdots B_{n-1}})=&D^2_{sep}(\ket{\psi}_{A|B_1B_2\cdots B_{n-1}})\nonumber\\
	=&C^2(\ket{\psi}_{A|B_1B_2\cdots B_{n-1}})\nonumber\\
	\ge&C^2(\rho_{A|B_1B_2\cdots B_{n-1}})\nonumber\\
	\ge&\sum_{i=1}^{n-1}C^2(\rho_{AB_i})\nonumber\\
	=&\sum_{i=1}^{n-1}\overline{D}_{sep}^2(\rho_{AB_i}).
	\end{align*}
In	the first equality, we denote $\ket{\psi}$ as the optimal in terms of $\overline{D}_{sep}(\cdot).$ As $\ket{\psi}_{A|B_1B_2\cdots B_{n-1}}$ is a $2\otimes n$ pure state, then we have $D^2_{sep}(\psi_{A|B_1B_2\cdots B_{n-1}})=C^2(\psi_{A|B_1B_2\cdots B_{n-1}})$ \cite{chen2016quantifying}. The first inequality is due to that $\rho_{A|B_1B_2\cdots B_{n-1}}$ can be transformed by $\psi_{A|B_1B_2\cdots B_{n-1}}$ under LOCC. The last inequality is proved in \cite{osborne2006general}. As when $\rho_{AB}$ is a two-qubit mixed state, $C(\rho_{AB})=\overline{D}_{sep}(\rho_{AB})$, we have the last equality is valid.
\end{proof}

\par
\section{Conclusion}\label{se4}
\par In this paper, we have considered the entanglement measures in terms of the trace norm under two methods.  We have presented the modified distance measure satisfies convexity for arbitrary dimensional systems. And we also have presented the analytical expressions for some classes of states. Then we have shown the extended measure is an entanglement monotone for the 2-qubit systems and monogamous for $n$-qubit systems.  We think the above results may be not valid for larger systems, although we havenot found a counterexample. At last, we believe that our results are helpful in the study of monogamy relations for multipartite entanglement systems. And we hope our work could shed some light on related studies.

\section{Acknowledgement}
\indent  This work was supported by the NNSF of China (Grant No. 11871089), and the Fundamental Research Funds for the Central Universities (Grant Nos. KG12080401 and ZG216S1902).
\bibliographystyle{IEEEtran}
\bibliography{ref}
\section{Appendix}\label{App}
Here we present the proof of Example \ref{Ex1} and Theorem \ref{t3}.

\subsection{The proof of Example \ref{Ex1}}
$Example$ \ref{Ex1}:
	Assume 
	\begin{align}
	\ket{W}=\frac{1}{\sqrt{3}}(\ket{001}+\ket{010}+\ket{100}),\nonumber\\\ket{GHZ}=\frac{1}{\sqrt{2}}(\ket{000}+\ket{111}).\nonumber
	\end{align} Then $D_{fsep}(W)=1, D_{fsep}(GHZ)=1.$\\
\begin{proof}
	\indent As $\ket{W}$ satisfies the following property:
	\begin{align}
	U^{\otimes3}_{\theta}\ket{W}\bra{W}{U_{\theta}^{\dagger}}^{{\otimes3}}=\ket{W}\bra{W},
	\end{align}
	here $U_{\theta}=\diag(1,e^{i\theta})$, then 
	\begin{align}
	&\norm{\ket{W}\bra{W}-\s}_1\nonumber\\
	=&\norm{U^{\otimes3}_{\theta}(\ket{W}\bra{W}-\s){U_{\theta}^{\dagger}}^{\otimes3}}_1\nonumber\\
	=&\frac{1}{2\pi}\int_{0}^{2\pi}\norm{U^{\otimes3}_{\theta}(\ket{W}\bra{W}-\s){U_{\theta}^{\dagger}}^{\otimes3}}_1d\theta\nonumber\\
	\ge&\norm{\frac{1}{2\pi}\int_{0}^{2\pi}[\ket{W}\bra{W}-U^{\otimes3}_{\theta}\sigma {U^{\dagger}_{\theta}}^{\otimes3}]d\theta}_1\nonumber\\
	=&\norm{\ket{W}\bra{W}-\s_d}_1,\label{w}
	\end{align}
	here $\s_d$ can be written as
	\begin{align} &\s_d\nonumber\\=&m_0\ket{000}\bra{000}+m_1\ket{W}\bra{W}+m_2\ket{\overline{W}}\bra{\overline{W}}+m_3\ket{111}\bra{111},\nonumber
	\end{align}
	\begin{align*}
	\ket{\overline{W}}=&\frac{1}{\sqrt{3}}(\ket{011}+\ket{101}+\ket{110}),
	\end{align*}
	From (\ref{w}), we have  
	\begin{align*}
	D_{fsep}(\ket{W})=\min_{\s\in DD\cap FSEP}\norm{\ket{W}\bra{W}-\s}_1,
	\end{align*}
	here the set $DD=\{\d=m_0\ket{000}\bra{000}+m_1\ket{W}\bra{W}+m_2\ket{\overline{W}}\bra{\overline{W}}+m_3\ket{111}\bra{111}|m_0,m_1,m_2,m_3\ge 0.\}.$ Combing the Theorem 1 in \cite{yu2016separability}, the above can be written as the following,
	\begin{align}
	&D_{fsep}(\ket{W})=\inf_{\vec{m}\in M}|m_0|+|m_2|+|m_3|+|1-m_1|,\nonumber\\
	&M=\{(m_0,m_1,m_2,m_3)|m_0m_2\ge m_1^2/3,m_1m_3\ge m_2^2/3.\},
	\end{align}
	from computation, we have $D_{fsep}(\ket{W})=1.$\par
	Next we present the result on $\ket{GHZ}$. In \cite{eltschka2012entanglement}, the authors presented that $\ket{GHZ}$ satisfies the following properties: \begin{itemize}
		\item[i.] it is a symmetric state,
		\item[ii.]$\sigma_x^{\otimes3}\ket{GHZ}=\ket{GHZ},$
		\item[iii.]$U(\phi_1,\phi_2)\ket{GHZ}=\ket{GHZ},$\\ here $U(\phi_1,\phi_2)=e^{i\phi_1\sigma_z}\otimes e^{i\phi_2\sigma_z}\otimes e^{-i(\phi_1+\phi_2)\sigma_z},$ $\phi_1,\phi_2$ are arbitrary values.
	\end{itemize} As $\ket{GHZ}$ is a symmetric state, that is, 
	\begin{align*}
	V_{\s}\ket{GHZ}\bra{GHZ}=\ket{GHZ}\bra{GHZ}V_{\s^{'}}=\ket{GHZ}\bra{GHZ},
	\end{align*}
	here $\s,\s^{'}$ are arbitrary permutations $\s,\s^{'}\in \Omega_N,$ $\Omega_N$ isthe group of all permutations of an $N$-element set, and $V_{\s}\ket{\psi_1}\ket{\psi_2}\cdots\ket{\psi_{N}}=\ket{\psi_{\s(1)}}\ket{\psi_{\s(2)}}\cdots\ket{\psi_{\s(N)}}.$ Assume $\theta$ is the state attaining the minimum, then
	\begin{align*}
	&\norm{\ket{GHZ}\bra{GHZ}-\frac{\theta+V_{\s}\theta V_{\s^{'}}}{2}}_1\nonumber\\
	=&\norm{\frac{1}{2}(\ket{GHZ}\bra{GHZ}-\theta)+\frac{1}{2}(V_{\s}(\ket{GHZ}\bra{GHZ}-\theta)V_{\s^{'}})}_1\nonumber\\
	\le&\frac{1}{2}(\norm{\ket{GHZ}\bra{GHZ}-\theta}_1+\norm{V_{\s}(\ket{GHZ}\bra{GHZ}-\theta)V_{\s^{'}}}_1\nonumber\\
	=&\norm{\ket{GHZ}\bra{GHZ}-\theta}_1,
	\end{align*}
	here $\s,\s^{'}$ are arbitrary permutations. As $s$ and $s^{'}$ are arbitrary, $\theta$ is a symmetric state. By the similar method, $\theta$ should satisfy the same properties (i),(ii) and (iii). That is, the matrix of $\theta$ should be the following:
	$\begin{pmatrix}
	m_0&0&0&0&0&0&0&n\\
	0&m_1&0&0&0&0&0&0\\
	0&0&m_1&0&0&0&0&0\\
	0&0&0&m_1&0&0&0&0\\
	0&0&0&0&m_1&0&0&0\\
	0&0&0&0&0&m_1&0&0\\
	0&0&0&0&0&0&m_1&0\\
	n&0&0&0&0&0&0&m_0
	\end{pmatrix} $\par
	Then by simple computation, we have $D_{fsep}(GHZ)=1.$
\end{proof}
 \subsection{The proof of Theorem \ref{t3}}
$Theorem$ \ref{t3}:
	Assume $\ket{\psi}=\sum_i\lambda_i\ket{ii}$ is an arbitrary pure state, here $\lambda_i\ge \lambda_{i+1},$then the modifies distance measure of $\ket{\psi}$ is 
	\begin{align}
	D^{'}_{sep}(\ket{\psi})=\left\{
	\begin{aligned}
	1\hspace{7mm} & , & \lambda_1\le\frac{1}{\sqrt{2}}, \\
	2\lambda_1\sqrt{1-\lambda_1^2} & , & \lambda_1>\frac{1}{\sqrt{2}}.
	\end{aligned}
	\right.
	\end{align}
\begin{proof}
	Assume $\ket{\psi}$ is a pure state, $\ket{\psi}=\sum_i \sqrt{\lambda_i}\ket{ii}$,  then we have $(U\otimes \overline{U})\ket{\psi}\bra{\psi}(U\otimes \overline{U})^{\dagger}=\ket{\psi}\bra{\psi}$, here $U$ is a diagonal unitary matrix. Next assume $\s$ is an arbitrary positive operator which can be written as $\r=\sum_i \omega_i^A\otimes\theta_i^B$, here $\omega_i^A$ and $\theta_i^B$ are positive operators on the systems $A$ and $B$, respectively.
	
	\begin{align}
	\norm{\ket{\psi}\bra{\psi}-\r}_1
	= &\int_{U\in \mathbb{D}(U)}\norm{(U\otimes \overline{U})(\ket{\psi}\bra{\psi}-\r)(U\otimes\overline{U})^{\dagger}}_1dU\nonumber\\
	\ge&\norm{\int_{U\in \mathbb{D}(U)}(U\otimes \overline{U})(\ket{\psi}\bra{\psi}-\r)(U\otimes\overline{U})^{\dagger}dU}_1\nonumber\\
	=&\norm{\ket{\psi}\bra{\psi}-\int_{U\in \mathbb{D}(U)}(U\otimes\overline{U})\r(U\otimes U)^{\dagger}dU}_1\label{s1}
	\end{align}
	
	Then let $\sigma^{'}$ be the optimal in terms of the modified distance measure $D_{sep}^{'}(\cdot)$ for $\ket{\psi}$, we have 
	\begin{align}
	D_{sep}^{'}(\ket{\psi}_{AB})=&\norm{\ket{\psi}\bra{\psi}-\sigma^{'}}_1\nonumber\\
	\ge&\norm{\ket{\psi}\bra{\psi}-\int_{U\in \mathbb{D}(U)}(U\otimes\overline{U})\sigma^{'}(U\otimes \overline{U})^{\dagger}dU}_1,\label{s2}
	\end{align} 
	as $\sigma^{'}$ is the optimal, and $\int_{U\in \mathbb{D}(U)}(U\otimes\overline{U})\sigma^{'}(U\otimes\overline{U})^{\dagger}$ is an element in the set of $\lambda SEP$, then we have the inequality in $(\ref{s2})$ is an equality. Let 
	\begin{align*}
	\vartheta=&\int_{U\in \mathbb{D}(U)} (U\otimes \overline{U})\varrho(U\otimes \overline{U})^{\dagger} dU\nonumber\\
	(\vartheta)_{ijkl}=&\int\int \int \int z_i\overline{z_j}z_l\overline{z_k}\varrho_{ijkl} dz_i dz_jdz_kdz_l\nonumber\\
	=&\left\{\begin{aligned}
	\varrho_{ijkl},\hspace{3mm} \textit{if $(i,l)=(j,k)$,}\nonumber\\
	\varrho_{ijkl},\hspace{3mm} \textit{if $(i,l)=(k,j)$,}\nonumber\\
	0,\hspace{11mm}  otherwise.
	\end{aligned}\right. 
	\end{align*}
	here $\mathbb{D}(U)$ is the set consisting of all diagnoal unitary matrices, $\varrho_{ijkl}$ denotes the coefficient of the state $\varrho$ in the basis $\ket{ij}\bra{kl}$. Hence we only need to compute the minimum of the following equality, here the minimum takes over all the positive operators $\varrho$ that can be written as $\varrho=\sum_i \omega_i^A\otimes\theta_i^B,$ $\omega_i^A$ and $\theta_i^B$ are positive operators on the systems $A$ and $B$, respectively.
	\begin{widetext}
		\begin{align}
		&\norm{\ket{\psi}\bra{\psi}-\sum_{i\ne j}\varrho_{iijj}\ket{ii}\bra{jj}-\sum_{i,j}\varrho_{ijij}\ket{ij}\bra{ij}}_1\nonumber\\
		=&\norm{\sum_{i,j}\sqrt{\lambda_i\lambda_j}\ket{ii}\bra{jj}-\sum_{i\ne j}\varrho_{iijj}\ket{ii}\bra{jj}-\sum_{i}\varrho_{iiii}\ket{ii}\bra{ii}}_1+\norm{\sum_{i\ne j}\varrho_{ijij}\ket{ij}\bra{ij}}_1\nonumber\\
		=&\norm{\sum_{i}(\lambda_i-\varrho_{iiii})\ket{ii}\bra{ii}+\sum_{i\ne j}[(\sqrt{\lambda_i\lambda_j}-\varrho_{iijj})\ket{ii}\bra{jj}]}_1+\norm{\sum_{i\ne j}\varrho_{ijij}\ket{ij}\bra{ij}}_1\nonumber\\
		=&\norm{\sum_{i}(\lambda_i-\varrho_{iiii})\ket{ii}\bra{ii}+\sum_{i\ne j}[(\sqrt{\lambda_i\lambda_j}-\varrho_{iijj})\ket{ii}\bra{jj}]}_1+\sum_{i\ne j}|\varrho_{ijij}|\nonumber\\
		\ge &\norm{\sum_{i}(\lambda_i-\varrho_{iiii})\ket{ii}\bra{ii}+\sum_{i\ne j}[(\sqrt{\lambda_i\lambda_j}-\varrho_{iijj})\ket{ii}\bra{jj}]}_1+\norm{\sum_{i< j}(\varrho_{iijj}\ket{ii}\bra{jj}+\overline{\varrho_{iijj}}\ket{jj}\bra{ii})}_1\nonumber\\
		\ge&\norm{\sum_{i}(\lambda_i-\varrho_{iiii})\ket{ii}\bra{ii}+\sum_{i\ne j}\sqrt{\lambda_i\lambda_j}\ket{ii}\bra{jj}}_1\nonumber\\
		=&\norm{\ket{\psi}\bra{\psi}-\sum_i\varrho_{iiii}\ket{ii}\bra{ii}}_1\nonumber\\
		=&\norm{\ket{\phi}\bra{\phi}-\sum_i \varrho_{iiii}\ket{i}\bra{i}}_1.
		\label{s3}
		\end{align}
	\end{widetext}
	In the first equality, as $\ket{\psi}=\sum_{i}\sqrt{\lambda_i}\ket{ii}$, the coefficients of $\ket{\psi}\bra{\psi}$ in the basis $\ket{ij}\bra{ij}$ are 0.  The first inequality is due to that the positive operator $\varrho$ satisfies the PPT condition, then $|\varrho_{iijj}|\le \sqrt{|\varrho_{ijij}\varrho_{jiji}|}\le\frac{|\varrho_{ijij}|+|\varrho_{jiji}|}{2}.$ The last inequality is due to the triangle inequality of the trace norm. In the last equality, $\ket{\phi}=\sum_i\sqrt{\lambda_i}\ket{i}.$ Next in \cite{johnston2018modified}, the authors showed that the optimal coherence states for a pure state in terms of the modified trace norm of coherence is a diagonal matrix, and in the last equality of $(\ref{s3})$, this problem is turned into the modified trace norm of coherence for a pure state, then we finish the theorem.
\end{proof}
\end{document}